 \definecolor{BLACK}{gray}{0}
 \definecolor{WHITE}{gray}{1}
 \definecolor{RED}{rgb}{1,0,0}
 \definecolor{GREEN}{rgb}{0,1,0}
 \definecolor{BLUE}{rgb}{0,0,1}
 \definecolor{CYAN}{cmyk}{1,0,0,0}
 \definecolor{MAGENTA}{cmyk}{0,1,0,0}
 \definecolor{YELLOW}{cmyk}{0,0,1,0}
\theoremstyle{plain}
\theoremstyle{plain}
\newtheorem{thm}{Theorem}
  \theoremstyle{remark}
  \newtheorem*{acknowledgement*}{Acknowledgement}
\begin{document}

\title{Unambiguous discrimination of linearly independent pure quantum states:
Optimal average probability of success }

\author{Somshubhro Bandyopadhyay}

\email{som@jcbose.ac.in}

\affiliation{Department of Physics and Center for Astroparticle Physics and Space
Science, Bose Institute, Block EN, Sector V, Bidhan Nagar, Kolkata
700091}
\begin{abstract}
We consider the problem of unambiguous (error-free) discrimination
of $N$ linearly independent pure quantum states with prior probabilities,
where the goal is to find a measurement that maximizes the average
probability of success. We derive an upper bound on the optimal average
probability of success using a result on optimal local conversion
between two bipartite pure states. We prove that for any $N\geq2$
an optimal measurement in general saturates our bound. In the exceptional
cases we show that the bound is tight, but not always optimal. 
\end{abstract}
\maketitle
One of the consequences of the superposition principle is that quantum
states could be nonorthogonal, which restricts our ability to reliably
determine the state of a quantum system even when the set of possible
states is known. Thus a fundamental problem in quantum mechanics is
to determine how well quantum states can be distinguished from one
another (see \cite{Chefles-2000,Barnett-Croke} for reviews). In its
simplest form, the problem is defined as follows: A quantum system
is prepared in one of $N$ known pure states $\left|\psi_{1}\right\rangle ,\left|\psi_{2}\right\rangle ,\dots,\left|\psi_{N}\right\rangle $
with associated probabilities $p_{1},p_{2},\dots,p_{N}$, where $0<p_{i}<1$
for every $i$ and $\sum_{i=1}^{N}p_{i}=1$. We do not know which
state the system is in, but wish to identify it. If the states are
mutually orthogonal, the solution is straightforward. However, if
the states are not mutually orthogonal, then quantum mechanics forbids
us from distinguishing them perfectly. Therefore, the objective is
to devise a measurement strategy that is optimal according to some
reasonable quantifier of distinguishability. This scenario is typical
in quantum information theory, especially in quantum communications
and quantum cryptography. 

In this paper we consider how well a given set of pure states can
be discriminated without error. This measurement strategy, known as
unambiguous discrimination, seeks \emph{certain} knowledge of the
state of the system balanced against a probability of failure. Since
no error is permitted, in addition to the measurement outcomes that
correctly identify the input state, an inconclusive outcome, which
is not informative, must be allowed. That is, either the input state
is correctly detected or the outcome is inconclusive, in which case
we do not learn anything about the state. It may be noted that in
other strategies such as minimum error discrimination \cite{Chefles-2000,Barnett-Croke}
and maximum confidence measurements \cite{Barnett-Croke}, we cannot
in general be completely sure of the identity of the input state. 

Unambiguous discrimination of pure quantum states is possible if and
only if the states are linearly independent \cite{Chefles-1998}.
This assumption will therefore hold throughout this paper. The measurement
is described by a POVM $\boldsymbol{\Pi}=\left\{ \Pi_{k}\right\} $
with $N+1$ outcomes. where $\Pi_{k}\geq0$ and $\sum_{k=1}^{N+1}\Pi_{k}=I$.
The POVM elements $\left\{ \Pi_{k}\left|k=1,\dots,N\right.\right\} $
are associated with success and satisfy, \begin{eqnarray}
\left\langle \psi_{i}\left|\Pi_{j}\right|\psi_{j}\right\rangle  & = & \gamma_{i}\delta_{ij},\;\;\forall i,j=1,\dots,N\label{unambiguous-M}\end{eqnarray}
where $\gamma_{i}$ is the probability of successfully detecting the
state $\left|\psi_{i}\right\rangle $. Note that Eq.$\,$(\ref{unambiguous-M})
implies that if the system is in state $\left|\psi_{i}\right\rangle $,
the outcome $j\neq i$ for $j=1,\dots,N$ will never occur. The operator
$\Pi_{N+1}=I-\sum_{i=1}^{N}\Pi_{i}$ corresponds to an inconclusive
outcome. Notice that the set of individual success probabilities $\left\{ \gamma_{1},\dots,\gamma_{N}\right\} $
is determined only by our choice of POVM. Thus for a given measurement
$\boldsymbol{\Pi}$, the average probability of success is defined
as\begin{eqnarray}
P\left(\boldsymbol{\Pi}\right) & = & \sum_{i=1}^{N}p_{i}\gamma_{i}.\label{avg-prob-success}\end{eqnarray}
The goal is to find a measurement that maximizes the average probability
of success. In particular, we are interested in the following quantity:\begin{eqnarray}
P_{\mbox{opt}} & = & \max_{\left\{ \boldsymbol{\Pi}\right\} }P\left(\boldsymbol{\Pi}\right)=\sum_{i=1}^{N}p_{i}\gamma_{i}^{\mbox{opt}}\label{opt-prob-success}\end{eqnarray}
where the optimal solution $\boldsymbol{\gamma}^{\mbox{opt}}=\left\{ \gamma_{i}^{\mbox{opt}}\left|i=1,\dots,N\right.\right\} $
is the set of individual success probabilities maximizing the average
probability of success. The optimal solution is known only for $N=2$
\cite{Ivanovic-1987,Peres-1988,Dieks-1988,Jaeger-Shimony-1995}, and
special cases for $N\geq3$ \cite{Bergou-Fut-Feld-2012,Pang-Wu-2009,Sugimoto-et-al-2010,Sun-et-al-2001,Peres-terno-1998}.
General results include lower \cite{Duan-Guo-98,Sun-2002} and upper
\cite{Zhang-2001} bounds on the average probability of success, a
solution for $N$ equi-probable symmetric states \cite{Chefles-Barnett-1995},
a formulation of the problem as a semi-definite program with results
for symmetric and geometrically uniform states \cite{Eldar-2003},
characterization of optimal solutions \cite{Pang-Wu-2009}, a graphic
method for finding and classifying optimal solutions \cite{Bergou-Fut-Feld-2012},
and solution for equidistant states \cite{Roa-etal-2011}. 

Before we state our results it is necessary to briefly review all
possible classes of optimal solution \cite{Pang-Wu-2009}, precise
definitions of which are given in the appendix. For a given set of
$N$ linearly independent pure states let $\mathcal{R}$ be the set
of all candidate optimal solutions. This set, said to be the critical
feasible region, is an $\left(N-1\right)$-dimensional region (hypersurface)
in the $N$-dimensional real vector space $\mathbb{R}^{N}$, and is
completely determined by the input states and the constraints imposed
by the problem. Once we specify the prior probabilities, the optimal
solution, which is an element of $\mathcal{R}$ becomes unique in
the sense that there is no other solution that is also optimal for
the same set of prior probabilities. Different sets of prior probabilities
in general lead to different optimal solutions within the set $\mathcal{R}$. 

The optimal solution is either an interior or a boundary point of
$\mathcal{R}$. If it is an interior point then it means that the
optimal measurement is able to discriminate all states, i.e., for
every $i$, $0<\gamma_{i}^{\mbox{opt}}\leq1$. On the other hand,
if it is a boundary point, then at least one of the optimal individual
success probabilities is zero. We say that an interior point is nonsingular
if the solution is nondegenerate, i.e., it can be the optimal solution
only for an unique set of prior probabilities. An interior point can
also be singular if the solution is degenerate, i.e., it can be the
optimal solution for different sets of prior probabilities. It should
be noted that interior singular points are exceptions and may not
even exist for a given set of states. Thus there are only three possible
classes of optimal solution: interior nonsingular, interior singular
and boundary. 

Using the conditions in \cite{Pang-Wu-2009}, it is easy to show that
for a given set of states, every interior nonsingular point is the
optimal solution for some set of prior probabilities. Noting that
the critical feasible region is of dimension $N-1$, the dimension
of the interior part is also $N-1$, whereas the dimensions of the
boundary regions are strictly less than $N-1$. Therefore, for almost
all assignments of prior probabilities, the optimal solution will
be an interior nonsingular point. In other words, for any given instance
of an unambiguous state discrimination problem, the optimal solution
in general will be an interior nonsingular point of the critical feasible
region. 

In this work we derive an upper bound on the optimal average probability
of success using a result \cite{Lo-Popescu-2001,Vidal-1999} on optimal
local conversion between two bipartite pure states. We prove that
the bound is saturated when the optimal solution is an interior nonsingular
point of the critical feasible region, which is the set of all candidate
optimal solutions. From the previous argument we therefore conclude
that for any given set of $N\geq2$ linearly independent pure states
with prior probabilities, the upper bound in general equals the optimal
average probability of success. 

When the optimal solution is either an interior singular point or
a boundary point, we show that the upper bound is tight. However,
we also show that it is not achieved in general by an optimal boundary
solution. The question, whether an optimal solution that is an interior
singular point always saturates our bound remains open.

We begin by obtaining an upper bound on the optimal average probability
of success. 
\begin{thm}
Suppose a quantum system is prepared in one of the linearly independent
pure states $\left|\psi_{1}\right\rangle ,\dots,\left|\psi_{N}\right\rangle $
with prior probabilities $p_{1},\dots,p_{N}$ respectively, where
$0<p_{i}<1$ for every $i$ and $\sum_{i=1}^{N}p_{i}=1$. For an optimal
unambiguous state discrimination measurement, the average success
probability $P_{\mbox{opt}}$ is bounded by
\end{thm}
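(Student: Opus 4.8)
The plan is to translate the measurement optimization into a statement about stochastic local conversion of a single bipartite pure state and then to read the bound off the Vidal/Lo--Popescu optimal conversion probability. First I would purify the ensemble into
\[
|\Psi\rangle \;=\; \sum_{i=1}^{N}\sqrt{p_i}\,|i\rangle_A|\psi_i\rangle_B ,
\]
with $\{|i\rangle_A\}$ an orthonormal reference basis. Its reduced state on $A$ has entries $(\rho_A)_{ij}=\sqrt{p_ip_j}\langle\psi_j|\psi_i\rangle$, i.e. $\rho_A=\sqrt{\Delta}\,G^{T}\sqrt{\Delta}$ with $\Delta=\mathrm{diag}(p_1,\dots,p_N)$ and $G_{ij}=\langle\psi_i|\psi_j\rangle$ the Gram matrix. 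The Schmidt coefficients of $|\Psi\rangle$ are thus the eigenvalues of the scaled Gram matrix $\sqrt{\Delta}\,G\,\sqrt{\Delta}$, and these are the only data the conversion formula will see.

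Next I would collapse the POVM constraints to a statement about $G$ alone. Unambiguity forces each success element to be proportional to a reciprocal vector, $\Pi_i\propto|\tilde\psi_i\rangle\langle\tilde\psi_i|$ with $\langle\tilde\psi_i|\psi_j\rangle\propto\delta_{ij}$, so that $\gamma_i=\langle\psi_i|\Pi_i|\psi_i\rangle$, and the completeness inequality $\sum_i\Pi_i\le I$, sandwiched between $\langle\psi_i|$ and $|\psi_j\rangle$, becomes the single semidefinite condition $\mathrm{diag}(\gamma_1,\dots,\gamma_N)\preceq G$. Hence $P_{\mathrm{opt}}=\max\{\sum_i p_i\gamma_i:\ \gamma_i\ge0,\ \mathrm{diag}(\gamma)\preceq G\}$, a convex (semidefinite) program whose feasible set is precisely the critical region $\mathcal{R}$. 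This reformulation is the bridge to entanglement theory, since a feasible $\{\gamma_i\}$ is exactly the data of an admissible local filtering on $B$ applied to $|\Psi\rangle$. I would then invoke the Vidal/Lo--Popescu theorem for the optimal probability of locally converting $|\Psi\rangle$ into an appropriate ``orthogonalized'' companion state built from the same $\{p_i\}$, whose value is $\min_{1\le k\le N}E_k(|\Psi\rangle)/E_k(|\Phi\rangle)$ with $E_k$ the tail sum of ordered squared Schmidt coefficients, and argue that the average success probability cannot exceed this optimum. Evaluating the formula on the spectrum of $\sqrt{\Delta}\,G\,\sqrt{\Delta}$ produces the explicit right-hand side of the theorem.

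The delicate part, and the step I expect to be the main obstacle, is orienting the correspondence so that it yields an \emph{upper} rather than a lower bound. A naive identification of the success probability with conversion \emph{toward} a maximally entangled state runs the wrong way: since each reciprocal-vector branch forces $\|c\|^{2}\ge N$ one gets $P_{\mathrm{opt}}\ge N\lambda_{\min}(\rho_A)$, a lower bound. The construction must instead be matched to the conversion probability through the dual of the semidefinite program above, namely $P_{\mathrm{opt}}=\min\{\mathrm{tr}(XG):X\succeq0,\ X_{ii}\ge p_i\}$, whose optimal certificate encodes the Vidal optimum. Finally, to prove saturation I would show that at an interior nonsingular point of $\mathcal{R}$ the active constraint $G-\mathrm{diag}(\gamma)\succeq0$ has a one-dimensional kernel, so that the optimal dual $X$ is extremal (rank one) and its value coincides exactly with the conversion optimum; for interior singular or boundary optimizers the kernel can be larger and the extremal certificate need no longer coincide with the genuine optimizer, which is precisely the regime where I expect the bound to stay valid and tight as a value yet fail to be attained by the optimal boundary measurement.
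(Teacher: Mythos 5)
Your diagnosis of the central difficulty is exactly right: with the standard purification $|\Psi\rangle=\sum_i\sqrt{p_i}\,|i\rangle_A|\psi_i\rangle_B$, an unambiguous measurement collapses the state to a \emph{product} state, so the Vidal/Lo--Popescu machinery runs toward a lower bound, not an upper bound. But your proposed repair does not close the gap, for two concrete reasons. First, the theorem's right-hand side, $\min_{\{\theta_j\}}\bigl\|\sum_j\sqrt{p_j}e^{i\theta_j}|\psi_j\rangle\bigr\|^2=\min\{u^\dagger(\sqrt{\Delta}G\sqrt{\Delta})u:\,|u_j|=1\ \forall j\}$, is a minimization over \emph{unimodular phase vectors} and is not a spectral function of $\sqrt{\Delta}G\sqrt{\Delta}$ at all; indeed, in your purification the phases $e^{i\theta_j}$ are just a local unitary $\mathrm{diag}(e^{i\theta_j})$ on $A$ and drop out of the Schmidt spectrum entirely. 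So ``evaluating the formula on the spectrum of $\sqrt{\Delta}\,G\,\sqrt{\Delta}$'' can only produce spectral quantities like $N\lambda_{\min}$, which lower-bounds the theorem's expression and is not an upper bound on $P_{\mathrm{opt}}$. Second, your appeal to the \emph{optimal} dual certificate is self-defeating: Slater's condition holds (take $\gamma_i=\epsilon$ with $G\succ0$ by linear independence), so the dual optimum in $P_{\mathrm{opt}}=\min\{\mathrm{tr}(XG):X\succeq0,\ X_{ii}\ge p_i\}$ equals $P_{\mathrm{opt}}$ itself; by the paper's Example III, $P_{\mathrm{opt}}$ can be \emph{strictly} smaller than the theorem's bound ($0.4632$ vs.\ $0.4758$), so the optimal certificate cannot ``encode'' that bound. (Also, minor: the feasible set of your SDP is the set $S$ of the paper's appendix, not the critical region $\mathcal{R}$, which is the singular part of its boundary.) The paper evades both problems by a different construction: the tripartite state $\sum_j\sqrt{p_j}e^{i\theta_j}|\psi_j\rangle_{A_1}\otimes|\Phi_j\rangle_{A_2B}$ with $\{|\Phi_j\rangle\}$ orthonormal maximally entangled states, for which an unambiguous measurement on $A_1$ \emph{is} a conversion protocol toward maximal entanglement (the correct, upper-bound direction), and whose Schmidt coefficients $\|\,|\eta_k\rangle\|^2/N$ genuinely depend on the phases, so that Lemma 1 plus minimization over $\{\theta_j\}$ yields the stated bound.

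The frustrating part is that your own SDP setup already contains a complete, and arguably cleaner, proof --- you just never take the decisive step. The rank-one matrices $X=vv^\dagger$ with $v_j=\sqrt{p_j}e^{i\theta_j}$ are dual feasible ($X\succeq0$, $X_{ii}=p_i$), and
\begin{equation*}
\mathrm{tr}(XG)=\sum_{j,k}v_j\bar{v}_kG_{kj}=\Bigl\|\sum_{j}\sqrt{p_j}e^{i\theta_j}|\psi_j\rangle\Bigr\|^2 .
\end{equation*}
Weak duality is elementary here: for any feasible $\gamma$, $\sum_i p_i\gamma_i\le\sum_i X_{ii}\gamma_i=\mathrm{tr}(X\Gamma)\le\mathrm{tr}(XG)$ since $\mathrm{tr}\bigl(X(G-\Gamma)\bigr)\ge0$. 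Minimizing over the phases then gives Theorem 1 with no entanglement theory at all, and it additionally explains the paper's findings about tightness: the bound is the dual optimum restricted to rank-one certificates, so it saturates precisely when the dual optimum is attained at rank one, and can fail to saturate otherwise (Example III). Had you written this step instead of the incorrect spectral/optimal-certificate claims, you would have a valid proof by a genuinely different route than the paper's; as it stands, the proposal has a real gap.
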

\begin{eqnarray}
P_{\mbox{opt}} & \leq & \min_{\left\{ \theta_{j}\right\} }\left\Vert \sum_{j=1}^{N}\sqrt{p_{j}}e^{i\theta_{j}}\left|\psi_{j}\right\rangle \right\Vert ^{2}.\label{thm-1}\end{eqnarray}
Essentially we are required to minimize the norm of the vector $\sum_{j=1}^{N}\sqrt{p_{j}}e^{i\theta_{j}}\left|\psi_{j}\right\rangle $
with respect to the real parameters $\left\{ \theta_{j}\left|j=1,\dots,N\right.\right\} $
we are free to vary. Because of this we can always set one of the
$\theta_{i}$, say, $\theta_{1}$, equal to zero and minimize the
norm with respect to the remaining $N-1$ parameters. However, it
is often useful to express inequality (\ref{thm-1}) in a form where
the parameters defining the inner products of the states become explicit.
Let $\left\langle \psi_{i}\left|\psi_{j}\right.\right\rangle =\left|\left\langle \psi_{i}\left|\psi_{j}\right.\right\rangle \right|e^{\phi_{ij}},\; i<j$,
We then have,\begin{equation}
P_{\mbox{opt}}\leq1+\min_{\left\{ \theta_{i}\right\} }\sum_{1\leq i<j\leq N}2\sqrt{p_{i}p_{j}}\left|\left\langle \psi_{i}\left|\psi_{j}\right.\right\rangle \right|\cos\left(\theta_{j}-\theta_{i}+\phi_{ij}\right).\label{eq:P-opt-2}\end{equation}
We shall use (\ref{eq:P-opt-2}) in the examples given later in the
paper and appendix. 

The proof of the theorem relies on two facts. First, any set of linearly
independent quantum states can be unambiguously discriminated \cite{Chefles-1998}.
This simply means that one can always find a measurement, which may
not be optimal, that unambiguously discriminates all states. Second,
a pure bipartite entangled state with $d$ nonzero Schmidt coefficients
can be converted, with some nonzero probability, to a maximally entangled
state in $d\otimes d$ by LOCC. The optimal probability of such a
local conversion can be obtained using the result in \cite{Lo-Popescu-2001,Vidal-1999}
and is stated in the following lemma (proof in appendix). 

\textbf{Lemma 1.}\emph{ Let $\left|\Psi\right\rangle _{AB}=\sum_{i=1}^{d}\sqrt{\alpha_{i}}\left|i\right\rangle _{A}\left|i\right\rangle _{B}$
be a bipartite pure entangled state, where $\left\{ \sqrt{\alpha_{i}}\right\} $
are the Schmidt coefficients such that $\alpha_{1}\geq...\geq\alpha_{d}>0$
and $\sum_{i=1}^{d}\alpha_{d}=1$. Then the optimal probability with
which $\left|\Psi\right\rangle _{AB}$ can be locally converted to
a maximally entangled state in $d\otimes d$ is given by $d\alpha_{d}$.} 
\begin{proof}
(Theorem 1) For convenience we first sketch the main idea behind the
proof. We shall begin with a scenario of local conversion between
two bipartite states (say, source and target), where the target state
is maximally entangled. The source state is so constructed that (a)
any measurement, say, $\boldsymbol{\Pi}$, on Alice's side that unambiguously
discriminates the states $\left\{ \left|\psi_{j}\right\rangle \left|j=1,\dots,N\right.\right\} $
constitutes a local protocol for the aforementioned state transformation,
and (b) the probability of local conversion, say, $P\left(\boldsymbol{\Pi}\right)$,
thus obtained is exactly equal to the average probability of success
in an unambiguous discrimination scenario, where the measurement $\boldsymbol{\Pi}$
distinguishes the states $\left\{ p_{j},\left|\psi_{j}\right\rangle \left|j=1,\dots,N\right.\right\} $.
However, for any $\boldsymbol{\Pi}$, $P\left(\boldsymbol{\Pi}\right)$
is bounded by the optimal local conversion probability obtained from
Lemma 1. An upper bound on the optimal average probability of success
follows by choosing Alice's measurement to be optimal for unambiguous
discrimination, that is, $\boldsymbol{\Pi}=\boldsymbol{\Pi}^{\mbox{opt}}$.
Further refinement leads us to inequality (\ref{thm-1}). We now give
the formal proof in three key steps. \textbf{}\\
(i)\textbf{ }Consider a bipartite scenario with two spatially
separated observers, Alice and Bob, with Alice holding quantum systems
$A_{1}$ and $A_{2}$ of dimensions $N^{\prime}\geq N$ and $N$,
respectively, and Bob holding a quantum system $B$ of dimension $N$.
Alice and Bob share the following pure state \begin{eqnarray}
\left|\psi^{\boldsymbol{\theta}}\right\rangle _{AB} & = & \sum_{j=1}^{N}\sqrt{p_{j}}e^{i\theta_{j}}\left|\psi_{j}\right\rangle _{A_{1}}\otimes\left|\Phi_{j}\right\rangle _{A_{2}B},\label{psiAB}\end{eqnarray}
where $\boldsymbol{\theta}$ represents the collection of parameters
$\left\{ \theta_{i}\left|i=1,\dots,N\right.\right\} $ allowed to
vary and $\left\{ \left|\Phi_{j}\right\rangle \vert j=1,\dots,N\right\} $
is a set of $N$ mutually orthonormal maximally entangled states in
$N\otimes N$ defined as\begin{eqnarray}
\left|\Phi_{j}\right\rangle  & = & \frac{1}{\sqrt{N}}\sum_{k=1}^{N}\exp\left(\frac{2\pi i\left(k-1\right)\left(j-1\right)}{N}\right)\left|k\right\rangle \vert k\rangle,\;\, j=1,\dots,N.\label{MES}\end{eqnarray}

Suppose Alice and Bob wish to convert $\left|\psi^{\boldsymbol{\theta}}\right\rangle _{AB}$
to a maximally entangled state, say, $\left|\Phi\right\rangle _{AB}$
in $N\otimes N$, by LOCC. This can be achieved by a local protocol,
which is not necessarily optimal, where Alice performs a generalized
measurement (POVM) $\boldsymbol{\Pi}=\left\{ \Pi_{k}\left|k=1,\dots,N+1\right.\right\} $
on system $A_{1}$ that unambiguously discriminates the states $\left\{ \left|\psi_{j}\right\rangle \left|j=1,\dots,N\right.\right\} $.
The POVM elements $\left\{ \Pi_{k}\left|k=1,\dots,N\right.\right\} $
satisfy Eq.$\,$(\ref{unambiguous-M}), where the outcomes $j=1,\dots,N$
correspond to success and the outcome $j=N+1$ corresponds to failure.
If the outcome is $j$, the measurement successfully detects the state
$\left|\psi_{j}\right\rangle $ for $j=1,\dots,N$. From the expression
of $\left|\psi^{\boldsymbol{\theta}}\right\rangle _{AB}$ given by
Eq.$\,$(\ref{psiAB}) it is evident that this occurs with probability
$p_{j}\gamma_{j}$, and for each of these cases the corresponding
maximally entangled state $\left|\Phi_{j}\right\rangle $ is created
between Alice and Bob. For $j=N+1$, the outcome is inconclusive,
and therefore will not be our concern. 

The above local protocol, with some nonzero probability, converts
the state $\left|\psi^{\boldsymbol{\theta}}\right\rangle _{AB}$ to
a maximally entangled state in $N\otimes N$. Note that for every
successful outcome, the maximally entangled state created between
Alice and Bob, can be converted to the designated state $\left|\Phi\right\rangle _{AB}$
by local unitaries. Thus the probability of creating a maximally entangled
state between Alice and Bob with this local protocol is $P\left(\boldsymbol{\Pi}\right)=\sum_{j=1}^{N}p_{j}\gamma_{j}$,
which is the same as the average probability of success in unambiguous
discrimination of the states $\left\{ p_{j},\left|\psi_{j}\right\rangle \left|j=1,\dots,N\right.\right\} $
with the measurement $\boldsymbol{\Pi}$. 

Now suppose that the POVM $\boldsymbol{\Pi}=\boldsymbol{\Pi}_{\mbox{opt}}$,
that is, the measurement is optimal for unambiguous discrimination
of the states $\left\{ p_{j},\left|\psi_{j}\right\rangle \left|j=1,\dots,N\right.\right\} $.
Then $P_{\mbox{opt}}=\sum_{j=1}^{N}p_{j}\gamma_{j}^{\mbox{opt}}$,
which by our previous argument is also the probability, not necessarily
optimal, of locally converting the state $\left|\psi^{\boldsymbol{\theta}}\right\rangle _{AB}$
to $\left|\Phi\right\rangle _{AB}$. However, $P_{\mbox{opt }}$ cannot
exceed the optimal local conversion probability $p\left(\psi_{AB}^{\boldsymbol{\theta}}\rightarrow\Phi_{AB}\right)$
that can be obtained by applying Lemma 1. Therefore, \begin{eqnarray}
P_{\mbox{opt}} & \leq & p\left(\psi_{AB}^{\boldsymbol{\theta}}\rightarrow\Phi_{AB}\right).\label{Plessthanp}\end{eqnarray}
(ii) To obtain an expression for $p\left(\psi_{AB}^{\boldsymbol{\theta}}\rightarrow\Phi_{AB}\right)$
we first write $\left|\psi^{\boldsymbol{\theta}}\right\rangle _{AB}$
in its Schimdt-decomposed form:\begin{eqnarray}
\left|\psi^{\boldsymbol{\theta}}\right\rangle  & = & \frac{1}{\sqrt{N}}\sum_{k=1}^{N}\left\Vert \left|\eta_{k}\right\rangle \right\Vert \left|\eta_{k}^{\prime}k\right\rangle _{A_{1}A_{2}}\left|k\right\rangle _{B},\label{psiAB-2}\end{eqnarray}
where $\left|\eta_{k}\right\rangle $ (unnormalized) is given by \begin{eqnarray}
\left|\eta_{k}\right\rangle  & = & \sum_{r=1}^{N}\sqrt{p_{r}}e^{i\theta_{r}}\exp\left(\frac{2\pi i}{N}\left(r-1\right)\left(k-1\right)\right)\left|\psi_{r}\right\rangle \label{eta_k}\end{eqnarray}
and $\left|\eta_{k}^{\prime}\right\rangle =\frac{1}{\left\Vert \left|\eta_{k}\right\rangle \right\Vert }\left|\eta_{k}\right\rangle $
is the normalized state. Observe that (\ref{psiAB-2}) is indeed the
Schmidt decomposition of $\left|\psi^{\boldsymbol{\theta}}\right\rangle _{AB}$
owing to $\left\langle \eta_{k}^{\prime}k\vert\eta_{m}^{\prime}m\right\rangle =\left\langle \eta_{k}^{\prime}\vert\eta_{m}^{\prime}\right\rangle \left\langle k\vert m\right\rangle =\left\langle \eta_{k}^{\prime}\vert\eta_{m}^{\prime}\right\rangle \delta_{km}$.
The Schmidt coefficients are given by $\frac{\left\Vert \left|\eta_{k}\right\rangle \right\Vert }{\sqrt{N}},\; k=1,\dots,N$,
where for every $k$, $\left\Vert \left|\eta_{k}\right\rangle \right\Vert >0$.
Thus all Schmidt coefficients are nonzero. Then from Lemma 1 it follows
that \begin{eqnarray}
p\left(\psi_{AB}^{\boldsymbol{\theta}}\rightarrow\Phi_{AB}\right) & = & \min_{k}\left\{ \left\Vert \left|\eta_{k}\right\rangle \right\Vert ^{2}\left|k=1,\dots,N\right.\right\} .\label{applying-vidal}\end{eqnarray}
It can be easily seen that $p\left(\psi_{AB}^{\boldsymbol{\theta}}\rightarrow\Phi_{AB}\right)$
depends on $\left\{ \theta_{i}\left|i=1,\dots,N\right.\right\} $,
inner products of the states $\left\{ \left|\psi_{r}\right\rangle \left|r=1,\dots,N\right.\right\} $,
and the probabilities $\left\{ p_{i}\left|i=1,\dots,N\right.\right\} $.
Of all these only the real parameters $\theta_{i}$ can be varied,
everything else remaining fixed for a given set $\left\{ p_{i},\left|\psi_{i}\right\rangle \left|i=1,\dots,N\right.\right\} $. 

Noting that $P_{\mbox{opt }}$ does not depend on $\left\{ \theta_{i}\right\} $,
inequality (\ref{Plessthanp}) therefore holds for \emph{any} set
$\left\{ \theta_{i}\right\} $, and in particular any set that minimizes
$p\left(\psi_{AB}^{\boldsymbol{\theta}}\rightarrow\Phi_{AB}\right)$.
Therefore, \begin{eqnarray}
P_{\mbox{opt}} & \leq & \min_{\left\{ \theta_{i}\right\} }p\left(\psi_{AB}^{\boldsymbol{\theta}}\rightarrow\Phi_{AB}\right),\label{min- Plessthanp}\end{eqnarray}
gives us the best possible bound on $P_{\mbox{opt}}$ using this approach.
\\
(iii) To evaluate the right-hand.side of (\ref{min- Plessthanp})
we proceed as follows. First, we observe that \begin{eqnarray}
\min_{\left\{ \theta_{i}\right\} }p\left(\psi_{AB}^{\boldsymbol{\theta}}\rightarrow\Phi_{AB}\right) & = & \min_{k}\left\{ \min_{\left\{ \theta_{i}\right\} }\left\Vert \left|\eta_{k}\right\rangle \right\Vert ^{2}\left|k=1,\dots,N\right.\right\} .\label{applying-vidal-refined}\end{eqnarray}
Next, we prove the following equality: \begin{eqnarray}
\min_{\left\{ \theta_{i}\right\} }\left\Vert \left|\eta_{k}\right\rangle \right\Vert ^{2} & = & \min_{\left\{ \theta_{i}\right\} }\left\Vert \left|\eta_{j}\right\rangle \right\Vert ^{2},\label{etak-etaj}\end{eqnarray}
for every pair $\left(k,j\right)$. To prove Eq.$\,$(\ref{etak-etaj})
we first express $\left\Vert \left|\eta_{k}\right\rangle \right\Vert ^{2}$
as,\begin{eqnarray}
\left\Vert \left|\eta_{k}\right\rangle \right\Vert ^{2} & = & \left\Vert \sum_{r=1}^{N}\sqrt{p_{r}}e^{i\theta_{r}^{\prime}(k)}\left|\psi_{r}\right\rangle \right\Vert ^{2},\label{eq:etak-last}\end{eqnarray}
where $\theta_{r}^{\prime}\left(k\right)=\theta_{r}+\frac{2\pi}{N}\left(r-1\right)\left(k-1\right)$
for $r=1,\dots,N$. Now suppose that the set $\left\{ \theta_{r}\left|r=1,\dots,N\right.\right\} $
minimizes $\left\Vert \left|\eta_{k}\right\rangle \right\Vert ^{2}$.
Noting that (\ref{eq:etak-last}) has exactly the same form of $\left\Vert \left|\eta_{1}\right\rangle \right\Vert ^{2}$,
the set $\left\{ \theta_{r}^{\prime}\left(k\right)\left|r=1,\dots,N\right.\right\} $
therefore minimizes $\left\Vert \left|\eta_{1}\right\rangle \right\Vert ^{2}$.
A similar argument holds for every $i,i\neq k$. We have therefore
proved (\ref{etak-etaj}) and consequently, \begin{eqnarray}
\min_{\left\{ \theta_{i}\right\} }p\left(\psi_{AB}^{\boldsymbol{\theta}}\rightarrow\Phi_{AB}\right) & = & \min_{\left\{ \theta_{i}\right\} }\left\Vert \left|\eta_{k}\right\rangle \right\Vert ^{2}\;\mbox{\ensuremath{\forall}}k=1,\dots,N\label{main-inequality}\end{eqnarray}
Inequalities (\ref{min- Plessthanp}) and (\ref{main-inequality})
for $k=1$ together prove the theorem. 
\end{proof}
We now show that the upper bound in (\ref{thm-1}) is saturated when
the optimal solution is a nonsingular interior point of the critical
feasible region $\mathcal{R}$. Therefore in a generic case the optimal
average probability of success is equal to the upper bound given by
Theorem 1. 
\begin{thm}
Let a quantum system be prepared in one of the linearly independent
pure states $\left|\psi_{1}\right\rangle ,\dots,\left|\psi_{N}\right\rangle $
with prior probabilities $p_{1},\dots,p_{N}$ respectively, where
$0<p_{i}<1$ for every $i$ and $\sum_{i=1}^{N}p_{i}=1$. For an optimal
unambiguous state discrimination measurement, suppose that the solution
is an interior nonsingular point of the critical feasible region.
Then \begin{eqnarray}
P_{\mbox{opt}} & = & \min_{\left\{ \theta_{i}\right\} }\left\Vert \sum_{j=1}^{N}\sqrt{p_{j}}e^{i\theta_{j}}\left|\psi_{j}\right\rangle \right\Vert ^{2}.\label{eq:thm2}\end{eqnarray}
\end{thm}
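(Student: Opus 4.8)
The plan is to prove the reverse of the inequality in Theorem~1, namely $P_{\mathrm{opt}}\geq\min_{\{\theta_j\}}\|\sum_j\sqrt{p_j}e^{i\theta_j}|\psi_j\rangle\|^2$, by recasting the optimization as a semidefinite program and exhibiting a rank-one dual optimum. First I would rewrite the success elements of any unambiguous POVM as $\Pi_k=\gamma_k|\tilde\psi_k\rangle\langle\tilde\psi_k|$, where $\{|\tilde\psi_k\rangle\}$ are the reciprocal vectors determined by $\langle\tilde\psi_i|\psi_j\rangle=\delta_{ij}$. Conjugating the positivity condition $\Pi_{N+1}=I-\sum_k\Pi_k\geq0$ by the frame of the $|\psi_j\rangle$ reduces it, on the span of the states, to the single matrix inequality $G-\Gamma\geq0$, where $G_{ij}=\langle\psi_i|\psi_j\rangle$ is the (positive-definite) Gram matrix and $\Gamma=\mathrm{diag}(\gamma_1,\dots,\gamma_N)$. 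Thus $P_{\mathrm{opt}}=\max\{\sum_k p_k\gamma_k:\gamma_k\geq0,\;G-\Gamma\geq0\}$, which is the semidefinite formulation of \cite{Eldar-2003}.

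I would then pass to the Lagrangian dual, which reads $P_{\mathrm{opt}}=\min\{\mathrm{Tr}(ZG):Z\geq0,\;Z_{kk}\geq p_k\}$; strong duality holds by Slater's condition, since taking all $\gamma_k$ equal to a small positive constant is strictly feasible. The crucial observation is that the bound of Theorem~1 is exactly this dual restricted to rank-one matrices of the form $Z=xx^{\dagger}$ with $|x_k|^2=p_k$: for such $Z$ one has $\mathrm{Tr}(ZG)=x^{\dagger}Gx=\|\sum_k\sqrt{p_k}e^{i\theta_k}|\psi_k\rangle\|^2$. Hence saturation of the bound is equivalent to the statement that the dual optimum is attained on this rank-one subset.

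To establish this in the interior nonsingular case I would invoke complementary slackness. Because the solution is interior, every $\gamma_k^{\mathrm{opt}}>0$, so the multipliers of the constraints $\gamma_k\geq0$ vanish and the diagonal of the optimal dual matrix is pinned to $Z^{\mathrm{opt}}_{kk}=p_k$; positive-semidefinite complementary slackness then forces $\mathrm{range}(Z^{\mathrm{opt}})\subseteq\ker(G-\Gamma^{\mathrm{opt}})$. If one can show that nonsingularity is equivalent to $\dim\ker(G-\Gamma^{\mathrm{opt}})=1$, then $Z^{\mathrm{opt}}$ has rank at most one, and since its diagonal entries $p_k$ are strictly positive it has rank exactly one, i.e.\ $Z^{\mathrm{opt}}=xx^{\dagger}$ with $x_k=\sqrt{p_k}e^{i\theta_k}$. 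Consequently $P_{\mathrm{opt}}=\mathrm{Tr}(Z^{\mathrm{opt}}G)=x^{\dagger}Gx$, which is one admissible value of the quantity minimized in Theorem~1 and is therefore at least the minimum; together with Theorem~1 this pins $P_{\mathrm{opt}}$ to the bound and proves (\ref{eq:thm2}).

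The step I expect to be the main obstacle is the equivalence between the paper's notion of an interior \emph{nonsingular} point and the algebraic condition $\dim\ker(G-\Gamma^{\mathrm{opt}})=1$. Using the optimality conditions of \cite{Pang-Wu-2009}, the priors $p$ for which a \emph{fixed} interior point $\Gamma^{\mathrm{opt}}$ is optimal are precisely the diagonals $\mathrm{diag}(Z)$ arising from $Z\geq0$ with range in $\ker(G-\Gamma^{\mathrm{opt}})$. When this kernel is one-dimensional, spanned by $v$, the only normalized such prior is $p_k\propto|v_k|^2$, so the prior is unique and the point is nondegenerate; conversely, when the kernel has dimension at least two, the diagonal map on the positive cone over the kernel sweeps out a family of genuinely distinct normalized priors, which is exactly degeneracy. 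Verifying that this diagonal map is truly multivalued in the degenerate case, and reconciling it with the appendix's definitions of the critical feasible region $\mathcal{R}$, is the delicate part of the argument; the remaining ingredients are routine linear algebra and convex duality.
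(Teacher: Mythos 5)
Your proposal is correct, but it proves the key fact rather than citing it, which makes it a genuinely different (and more self-contained) argument than the paper's. The paper's proof of Theorem 2 is essentially two lines: it quotes the result of \cite{Pang-Wu-2009} that at an interior nonsingular optimum $P_{\mbox{opt}}=\bigl\Vert\sum_{j}\sqrt{p_{j}}e^{i\theta_{j}^{\prime}}|\psi_{j}\rangle\bigr\Vert^{2}$ for \emph{some} (unknown) phases $\theta_{j}^{\prime}$, and then sandwiches this value against the upper bound of Theorem 1 to conclude that the stationary value must be the minimum. You reconstruct exactly that quoted formula from scratch: the SDP formulation $\max\{\sum_{k}p_{k}\gamma_{k}:\gamma_{k}\geq0,\,G-\Gamma\succeq0\}$, its dual $\min\{\mathrm{Tr}(ZG):Z\succeq0,\,Z_{kk}\geq p_{k}\}$ with strong duality via Slater (valid since $G\succ0$ for linearly independent states), and complementary slackness at an interior point forcing $Z_{kk}^{\mathrm{opt}}=p_{k}$ and $\mathrm{range}(Z^{\mathrm{opt}})\subseteq\ker(G-\Gamma^{\mathrm{opt}})$, so that a one-dimensional kernel yields a rank-one dual optimum $Z^{\mathrm{opt}}=xx^{\dagger}$ with $x_{k}=\sqrt{p_{k}}e^{i\theta_{k}}$ and $P_{\mbox{opt}}=x^{\dagger}Gx$; the final sandwich step is then identical to the paper's. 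Your flagged ``delicate step'' is genuine but closes cleanly: the paper's main text \emph{defines} an interior nonsingular point by nondegeneracy (unique optimal prior), and your KKT characterization shows the admissible priors for a fixed interior point are exactly the normalized diagonals of PSD matrices supported on $\ker(G-\Gamma^{\mathrm{opt}})$; a one-dimensional kernel spanned by $v$ gives the unique prior $p_{k}=|v_{k}|^{2}$, while for $\dim\ker\geq2$ two orthonormal kernel vectors $v,w$ cannot satisfy $|v_{k}|^{2}=|w_{k}|^{2}=|\tfrac{1}{\sqrt{2}}(v_{k}+e^{i\phi}w_{k})|^{2}$ for all $k$ and all $\phi$ (that would force $\bar{v}_{k}w_{k}=0$ and then $v=w=0$), so degeneracy indeed occurs. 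What the paper's route buys is brevity, at the price of resting entirely on \cite{Pang-Wu-2009}; what your route buys is a self-contained convex-duality proof that also reproves Theorem 1 itself for free (rank-one matrices $xx^{\dagger}$ with $|x_{k}|^{2}=p_{k}$ are dual feasible, so $P_{\mbox{opt}}\leq\min_{\{\theta_{k}\}}\Vert\sum_{k}\sqrt{p_{k}}e^{i\theta_{k}}|\psi_{k}\rangle\Vert^{2}$ without any entanglement-transformation argument), and it makes transparent why the bound can fail to be saturated at boundary points, where some $\gamma_{k}^{\mathrm{opt}}=0$ and the dual diagonal is no longer pinned to $p$.
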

\begin{proof}
In \cite{Pang-Wu-2009} it was shown that if $\boldsymbol{\gamma}^{\mbox{opt}}$
is an interior nonsingular point of the critical feasible region,
then $P_{\mbox{opt}}$ can be expressed as \begin{eqnarray}
P_{\mbox{opt}} & = & \left\Vert \sum_{j}\sqrt{p_{j}}e^{i\theta_{j}^{\prime}}\left|\psi_{j}\right\rangle \right\Vert ^{2},\label{eq:proof-thm-2-eq-1}\end{eqnarray}
but no explicit expressions of the phases $e^{i\theta_{j}^{\prime}}$
were given. However, it was noted that $P_{\mbox{opt }}$ must be
the value of a stationary point if the phases are allowed to change
freely. Note that without the explicit knowledge of the phases or
knowing how to obtain them (a stationary point may be a minimum or
maximum), Eq.$\,$(\ref{eq:proof-thm-2-eq-1}) is not very useful. 

However, the upper bound in Theorem 1 {[}inequality$\,$(\ref{thm-1}){]}
which holds irrespective of the class of optimal solution, fills this
gap. Consequently, \begin{eqnarray}
P_{\mbox{opt}} & = & \min_{\left\{ \theta_{i}\right\} }\left\Vert \sum_{j}\sqrt{p_{j}}e^{i\theta_{j}}\left|\psi_{j}\right\rangle \right\Vert ^{2}.\label{eq:proof-thm-2-eq-3}\end{eqnarray}
This proves the theorem and also shows that the stationary point must
be a minimum. 
\end{proof}
It is not clear whether our bound saturates for the other two classes
of optimal solution, as the expression (\ref{eq:proof-thm-2-eq-1})
was obtained \cite{Pang-Wu-2009} assuming that the optimal solution
is an interior nonsingular point of the critical feasible region.
However, by considering examples from each of the other two classes,
we first show that the upper bound given by Theorem 1 is tight for
both. The third example shows that for a boundary solution, the optimal
value could be strictly less than the value obtained from our bound.
Therefore, an optimal boundary solution will not in general saturate
our bound. 

\emph{Example I:} \emph{boundary point.} We begin by considering an
example for $N=3$ \cite{Sun-et-al-2001}, where the given states,
$\left|\psi_{1}\right\rangle =\left(\begin{array}{ccc}
1 & 0 & 0\end{array}\right)^{T}$,$\left|\psi_{2}\right\rangle =\sqrt{\frac{1}{3}}\left(\begin{array}{ccc}
1 & 1 & 1\end{array}\right)^{T}$, and $\left|\psi_{3}\right\rangle =\frac{1}{\sqrt{3}}\left(\begin{array}{ccc}
1 & 1 & -1\end{array}\right)^{T}$ are equally likely. Noting that the inner products are all real,
inequality (\ref{eq:P-opt-2}) becomes (set $\theta_{1}=0$)\begin{eqnarray*}
P_{\mbox{opt}} & \leq & 1+\min_{\left\{ \theta_{2},\theta_{3}\right\} }\frac{2}{3\sqrt{3}}\left[\cos\theta_{2}+\cos\theta_{3}+\frac{1}{\sqrt{3}}\cos\left(\theta_{3}-\theta_{2}\right)\right]\end{eqnarray*}
By simple numerical minimization using Mathematica we find that $P_{\mbox{opt}}\leq0.4444$.
In \cite{Sun-et-al-2001} it was shown that $\boldsymbol{\gamma}^{\mbox{opt}}=\left\{ 0,\frac{2}{3},\frac{2}{3}\right\} $,
from which we obtain $P_{\mbox{opt}}=\frac{4}{9}=0.4444$. Thus the
upper bound is achieved. As one of the individual success probabilities
is zero, the optimum point is therefore on the boundary. 

It is easy to construct an example for any $N\geq4$ starting from
the one we just discussed. Here we give an example for $N=4$, from
which it will be evident how to generalize for higher $N$. Consider
the set of states $\left\{ \left|\psi_{i}\right\rangle \left|i=1,\dots,4\right.\right\} $,
where the first three states are from the above example, and the new
state $\left|\psi_{4}\right\rangle $ has the property that $\left|\psi_{4}\right\rangle \perp\left|\psi_{i}\right\rangle $
for $i=1,2,3$. We choose the prior probabilities as $p_{i}=\frac{1-p}{3}$
for $i=1,2,3$ and $p_{4}=p$, where $0<p<1$. In this case, using
inequality (\ref{eq:P-opt-2}) we find that \begin{eqnarray*}
P_{\mbox{opt}} & \leq & p+0.4444\left(1-p\right)\end{eqnarray*}
 To show that the above bound is tight, we find the optimal set of
the individual success probabilities. Noting that $\left|\psi_{4}\right\rangle $
is orthogonal to every other state, it is easy to obtain that $\gamma^{\mbox{opt}}=\left\{ 0,\frac{2}{3},\frac{2}{3},1\right\} $
and $P_{\mbox{opt}}=p+\frac{4}{9}\left(1-p\right)$, thereby achieving
the upper bound. 

\emph{Example II:} \emph{interior singular point}. We begin by considering
such an example for $N=3$ \cite{Pang-Wu-2009}. Consider the following
vectors $\left|\psi_{1}\right\rangle =\left(\begin{array}{ccc}
1 & 0 & 0\end{array}\right)^{T}$, $\left|\psi_{2}\right\rangle =\sqrt{\frac{1}{5}}\left(\begin{array}{ccc}
1 & 2 & 0\end{array}\right)^{T}$ and $\left|\psi_{3}\right\rangle =\frac{2}{\sqrt{17}}\left(\begin{array}{ccc}
1 & 1 & \frac{3}{2}\end{array}\right)^{T},$ with prior probabilities $p_{1}=0.30$, $p_{2}=0.35$, and $p_{3}=0.35$
respectively. We see that the inner products are real. Using inequality
(\ref{eq:P-opt-2}), a simple numerical minimization using Mathematica
shows that $P_{\mbox{opt}}\leq0.4430$ which agrees with the optimal
value \cite{Pang-Wu-2009}. Following the method used in the previous
example, we can therefore generalize this example for any $N\geq4$. 

\emph{Example III:}\textbf{ }In this example we show that the upper
bound does not saturate in general for an optimal boundary solution.
This example is from \cite{Pang-Wu-2009}, where the states $\left|\psi_{1}\right\rangle =\left(\begin{array}{ccc}
1 & 0 & 0\end{array}\right)^{T}$, $\left|\psi_{2}\right\rangle =\sqrt{\frac{1}{5}}\left(\begin{array}{ccc}
1 & 2 & 0\end{array}\right)^{T}$ and $\left|\psi_{3}\right\rangle =\frac{2}{\sqrt{17}}\left(\begin{array}{ccc}
1 & 1 & \frac{3}{2}\end{array}\right)^{T}$ occur with prior probabilities $p_{1}=0.10$, $p_{2}=0.80$, and
$p_{3}=0.10$ respectively. Once again using inequality (\ref{eq:P-opt-2}),
a simple numerical minimization using Mathematica shows that $P_{\mbox{opt}}\leq0.4758$,
which is pretty close to the optimal value $P_{\mbox{opt}}=0.4632$
\cite{Pang-Wu-2009}. 

To conclude, we studied the problem of unambiguous discrimination
of $N$ linearly independent pure quantum states, where the measurement
strategy is such that either the input state is correctly identified
(zero error) or we learn nothing about it. The objective is to find
a measurement that maximizes the average probability of success. This
problem has been extensively studied over the years, but the exact
solution is known only for $N=2$, and special cases for $N\geq3$.
In this paper we obtained an upper bound on the optimal average probability
of success using a result \cite{Lo-Popescu-2001,Vidal-1999} on optimal
local conversion between two bipartite pure states. We showed that
for $N\geq2$ an optimal measurement in general saturates our bound,
thereby providing an exact expression of the optimal average probability
of success in the generic case. In the exceptional cases we have shown
that the bound is tight, but not always attained for an optimal boundary
solution. \\

\begin{acknowledgement*}
The author is grateful to Sibasish Ghosh and Michael Nathanson for
comments on the manuscript and very useful discussions. Discussions
with Rahul Jain, R Rajesh and Manik Banik are gratefully acknowledged.
This research is supported in part by DST-SERB project SR/S2/LOP-18/2012. \end{acknowledgement*}

\section*{APPENDIX}

\section{Proof of Lemma 1}

Let\emph{ $\left|\Psi\right\rangle _{AB}=\sum_{i=1}^{d}\sqrt{\alpha_{i}}\left|i\right\rangle _{A}\left|i\right\rangle _{B}$
}be a bipartite pure entangled state, where $\left\{ \sqrt{\alpha_{i}}\right\} $
are the Schmidt coefficients such that $\alpha_{1}\geq\dots\geq\alpha_{d}>0$
and $\sum_{i=1}^{d}\alpha_{d}=1.$ Let $|\Phi\rangle$ be a maximally
entangled state in $d\otimes d$. From Vidal's Theorem \cite{Vidal-1999},
the optimal probability of local conversion $|\Psi\rangle\rightarrow|\Phi\rangle$
is given by\[
P(\Psi_{AB}\rightarrow\Phi)=\min_{l\in[1,d]}q_{l},\]
where \[
q_{l}=\frac{\sum_{i=l}^{d}\alpha_{i}}{\frac{1}{d}(d-l+1)}.\]

Because $\alpha_{1}\geq\alpha_{2}\geq...\geq\alpha_{d}>0$, we have
$\sum_{i=l}^{d}\alpha_{i}\geq(d-l+1)\alpha_{d}$. Therefore, for every
$l$, $l=1,...,d-1$, we have \begin{eqnarray*}
q_{l} & = & \frac{\sum_{i=l}^{d}\alpha_{i}}{\frac{1}{d}(d-l+1)}\\
 & \geq & \frac{(d-l+1)\alpha_{d}}{\frac{1}{d}(d-l+1)}\\
 & = & d\alpha_{d}\\
 & = & q_{d}.\end{eqnarray*}
 This completes the proof.

\section{Classes of Optimal Solution }

Here we define all possible classes of optimal solution following
\cite{Pang-Wu-2009}. Consider the sets of individual success probabilities
$\left\{ \gamma_{1},\gamma_{2},\dots,\gamma_{N}\right\} $ and the
prior probabilities $\left\{ p_{1},p_{2},\dots,p_{N}\right\} $ as
vectors $\boldsymbol{\gamma}$ and $\mathbf{p}$ respectively in the
$N$ dimensional real vector space $\mathbb{R}^{N}$.  For convenience
(and to avoid any confusion) we adapt the nomenclature of \cite{Pang-Wu-2009}
and refer to these vectors as {}``points''. Now for a given set
of states, the set of possible optimal solutions is determined by
the constraints imposed by the problem of unambiguous discrimination. 

Define the matrices $X=\Lambda^{\dagger}\Lambda$ and $\Gamma$, where
$\Lambda$ is the matrix whose $i^{\mbox{th }}$ column is $\left|\psi_{i}\right\rangle $
and $\Gamma\geq0$ is a diagonal matrix, whose diagonal elements are
the success probabilities $\gamma_{i}$. It was shown \cite{Pang-Wu-2009}
that the set of points \textbf{$\boldsymbol{\gamma}$} (denote by
$S$) satisfying the constraints $X-\Gamma\geq0$ and $\Gamma\geq0$
imposed by the problem, is convex. The set $S$ is said to be the
feasible set. The critical feasible region $\mathcal{R}$ is defined
as the set of points $\boldsymbol{\gamma}\in S$ satisfying $\sigma_{\min}\left(\boldsymbol{\gamma}\right)=0$,
where $\sigma_{\min}$ is the minimum eigenvalue of $X-\Gamma$. This
set is closed. Note that the critical feasible region is in fact the
set of candidate optimal solutions and is fixed for a given set of
states. Once we specify the prior probabilities, the optimal solution
becomes unique in the sense that there is no other solution which
is also optimal for the same set of prior probabilities. Different
sets of prior probabilities in general lead to different optimal solutions
within the set $\mathcal{R}$. 

It was shown \cite{Pang-Wu-2009} that the optimal solution $\boldsymbol{\gamma}^{\mbox{opt}}\in\mathcal{R}$
is either an interior nonsingular point (that is, $\nabla\sigma_{n}\left(\boldsymbol{\gamma}\right)\vert_{\boldsymbol{\gamma}^{\mbox{opt}}}=-\mathbf{p}$),
or an interior singular point($\nabla\sigma_{n}\left(\boldsymbol{\gamma}\right)=\mathbf{0}$),
or a point on the boundary of $\mathcal{R}$. If it is an interior
point (nonsingular or singular) then it means that the optimal measurement
is able to discriminate all states, that is, for every $i$, $0<\gamma_{i}^{\mbox{opt}}\leq1$.
On the other hand, if it is a boundary point, then at least one of
the optimal individual success probabilities is zero. Moreover, an
interior nonsingular optimal solution is nondegenerate, i.e., it's
the optimal solution for an unique set of prior probabilities, wheras
an interior singular point solution is degenerate, which implies that
it can be the optimal solution for different sets of prior probabilities.
It should be noted that interior singular points are exceptions and
may not even exist for a given set of states. For the necessary and
sufficient conditions pertaining to these optimal solutions and further
details please see \cite{Pang-Wu-2009}.

\section{Examples}

Here we illustrate with several examples where our bound is saturated.

\subsection{Two states }

For two states inequality (\ref{eq:P-opt-2}) reduces to \begin{eqnarray*}
P_{\mbox{opt}} & \leq & 1+\min_{\theta_{1},\theta_{2}}2\sqrt{p_{1}p_{2}}\left|\left\langle \psi_{1}\left|\psi_{2}\right.\right\rangle \right|\cos\left(\theta_{2}-\theta_{1}+\phi_{12}\right).\end{eqnarray*}
Because $\phi_{12}$ is fixed, the minimum is clearly given by choosing
$\theta_{2},\theta_{1}$ such that $\theta_{2}-\theta_{1}+\phi_{12}=\pi$.
Set $\theta_{1}=0$, and $\theta_{2}=\pi-\phi_{12}$ yielding\begin{eqnarray}
P_{\mbox{opt}} & \leq & 1-2\sqrt{p_{1}p_{2}}\left|\left\langle \psi_{1}\left|\psi_{2}\right.\right\rangle \right|.\label{two-state}\end{eqnarray}
The upper bound given by (\ref{two-state}) matches the IDP result
\cite{Ivanovic-1987,Dieks-1988,Peres-1988} obtained when the states
are equally likely and the more general result by Jaeger and Shimony
\cite{Jaeger-Shimony-1995} for unequal prior probabilities.

\subsection{Three states}

The case $N=3$ has been extensively studied, but an analytical solution
is not known except for special cases. Here we consider several examples
from the literature, and show that our bound is tight in each case.
We first write (\ref{eq:P-opt-2}) explicitly for $N=3$, where without
loss of generality, we have set $\theta_{1}=0$. \begin{eqnarray}
P_{\mbox{opt}} & \leq & 1+\min_{\theta_{2},\theta_{3}}2\left[\sqrt{p_{1}p_{2}}\left|\left\langle \psi_{1}\left|\psi_{2}\right.\right\rangle \right|\cos\left(\theta_{2}+\phi_{12}\right)+\sqrt{p_{1}p_{3}}\left|\left\langle \psi_{1}\left|\psi_{3}\right.\right\rangle \right|\cos\left(\theta_{3}+\phi_{13}\right)\right.\nonumber \\
 &  & +\left.\sqrt{p_{2}p_{3}}\left|\left\langle \psi_{2}\left|\psi_{3}\right.\right\rangle \right|\cos\left(\theta_{3}-\theta_{2}+\phi_{23}\right)\right].\label{P-opt-three-states}\end{eqnarray}

\subsubsection*{Example 1 }

Suppose that $\left\langle \psi_{1}\left|\psi_{2}\right.\right\rangle =0$,
but $\left\langle \psi_{1}\left|\psi_{3}\right.\right\rangle \neq0$,
$\left\langle \psi_{2}\left|\psi_{3}\right.\right\rangle \neq0$,
then inequality (\ref{P-opt-three-states}) becomes \begin{eqnarray}
P_{\mbox{opt}} & \leq & 1+\min_{\theta_{2},\theta_{3}}2\left[\sqrt{p_{1}p_{3}}\left|\left\langle \psi_{1}\left|\psi_{3}\right.\right\rangle \right|\cos\left(\theta_{3}+\phi_{13}\right)+\sqrt{p_{2}p_{3}}\left|\left\langle \psi_{2}\left|\psi_{3}\right.\right\rangle \right|\cos\left(\theta_{3}-\theta_{2}+\phi_{23}\right)\right].\label{ex-1-eq-2}\end{eqnarray}
The minimum of the right hand side is given by $\theta_{2},\theta_{3}$
satisfying\begin{eqnarray*}
\cos\left(\theta_{3}+\phi_{13}\right) & = & -1\\
\cos\left(\theta_{3}-\theta_{2}+\phi_{23}\right) & = & -1\end{eqnarray*}
The above two equations are satisfied for $\theta_{3}=\pi-\phi_{13}$,
and $\theta_{2}=\phi_{23}-\phi_{13}$. We therefore have \begin{eqnarray}
P_{\mbox{opt}} & \leq & 1-2\left[\sqrt{p_{1}p_{3}}\left|\left\langle \psi_{1}\left|\psi_{3}\right.\right\rangle \right|+\sqrt{p_{2}p_{3}}\left|\left\langle \psi_{2}\left|\psi_{3}\right.\right\rangle \right|\right].\label{ex-1-eq-1}\end{eqnarray}
which matches the optimal value obtained in \cite{Pang-Wu-2009}.

\subsubsection*{Example 2}

This example is from \cite{Bergou-Fut-Feld-2012}, where the authors
introduced the invariant phase also known as the geometric phase.
Denote the complex overlaps of the states as $\left\langle \psi_{1}\left|\psi_{2}\right.\right\rangle =\left|\left\langle \psi_{1}\left|\psi_{2}\right.\right\rangle \right|e^{i\phi_{3}}$
and two more cyclic permutation of the indices. The invariant phase
$\phi$, defined as $\phi=\phi_{1}+\phi_{2}+\phi_{3}$ corresponds
to the phase deficit associated with a closed path in the parameter
space. 

To make the connection explicit we rewrite inequality (\ref{P-opt-three-states})
with the following substitutions $\phi_{12}=\phi_{3}$, $-\phi_{13}=\phi_{2}$
and $\phi_{23}=\phi_{1}$: \begin{eqnarray}
P_{\mbox{opt}} & \leq & 1+\min_{\theta_{2},\theta_{3}}2\left[\sqrt{p_{1}p_{2}}\left|\left\langle \psi_{1}\left|\psi_{2}\right.\right\rangle \right|\cos\left(\theta_{2}+\phi_{3}\right)+\sqrt{p_{1}p_{3}}\left|\left\langle \psi_{1}\left|\psi_{3}\right.\right\rangle \right|\cos\left(\theta_{3}-\phi_{2}\right)\right.\nonumber \\
 &  & +\left.\sqrt{p_{2}p_{3}}\left|\left\langle \psi_{2}\left|\psi_{3}\right.\right\rangle \right|\cos\left(\theta_{3}-\theta_{2}+\phi_{1}\right)\right].\label{P-opt-three-states-01}\end{eqnarray}
 Noting that $\phi=\sum_{i=1}^{3}\phi_{i}$, \begin{eqnarray}
P_{\mbox{opt}} & \leq & 1+\min_{\theta_{2},\theta_{3}}2\left[\sqrt{p_{1}p_{2}}\left|\left\langle \psi_{1}\left|\psi_{2}\right.\right\rangle \right|\cos\alpha+\sqrt{p_{1}p_{3}}\left|\left\langle \psi_{1}\left|\psi_{3}\right.\right\rangle \right|\cos\beta\right.\nonumber \\
 &  & +\left.\sqrt{p_{2}p_{3}}\left|\left\langle \psi_{2}\left|\psi_{3}\right.\right\rangle \right|\cos\left(\alpha-\beta+\phi\right)\right],\label{P-opt-three-states-02}\end{eqnarray}
where $\alpha=\theta_{2}+\phi_{3}$, $\beta=\theta_{3}-\phi_{2}$.
When $\phi=\pi$, minimum of the r.h.s. of (\ref{P-opt-three-states-02})
is obtained for $\alpha=\beta=\pi$, giving the following upper bound:
\begin{eqnarray*}
P_{\mbox{opt}} & \leq & 1-2\left[\sqrt{p_{1}p_{2}}\left|\left\langle \psi_{1}\left|\psi_{2}\right.\right\rangle \right|+\sqrt{p_{1}p_{3}}\left|\left\langle \psi_{1}\left|\psi_{3}\right.\right\rangle \right|+\sqrt{p_{2}p_{3}}\left|\left\langle \psi_{2}\left|\psi_{3}\right.\right\rangle \right|\right],\end{eqnarray*}
which agrees with the optimal value \cite{Bergou-Fut-Feld-2012} obtained
when $\gamma_{i}>0$ for every $i$.

\subsubsection*{Example 3}

The \emph{a priori} probabilities are equal and the overlap of the
three states are real and equal \[
\left\langle \psi_{1}\left|\psi_{3}\right.\right\rangle =\left\langle \psi_{1}\left|\psi_{3}\right.\right\rangle =\left\langle \psi_{2}\left|\psi_{3}\right.\right\rangle =s\;:\;0<s<1\]
Inequality (\ref{P-opt-three-states}) becomes \begin{eqnarray*}
P_{\mbox{opt}} & \leq & 1+\frac{2s}{3}\times\min_{\theta_{2},\theta_{3}}\left[\cos\theta_{2}+\cos\theta_{3}+\cos\left(\theta_{3}-\theta_{2}\right)\right].\end{eqnarray*}
A simple minimization using Mathematica shows that $\min\left[\cos\theta_{2}+\cos\theta_{3}+\cos\left(\theta_{3}-\theta_{2}\right)\right]=-\frac{3}{2}$.
Therefore, \begin{eqnarray*}
P_{\mbox{opt}} & \leq & 1-s,\end{eqnarray*}
which matches the optimal value obtained in \cite{Sun-et-al-2001}.

\subsection{Four states }

Here we will consider unambiguous discrimination of four geometrically
uniform states with equal prior probabilities \cite{Eldar-2003}.
Geometrically uniform states are defined over a group $\mathcal{G}$
of unitary matrices and are obtained by a single generating vector.
Consider the group $\mathcal{G}$ of $N=4$ unitary matrices $U_{i}$
defined as:\[
\begin{array}{cccc}
U_{1}=I_{4}, & U_{2}=\left[\begin{array}{cccc}
1 & 0 & 0 & 0\\
0 & -1 & 0 & 0\\
0 & 0 & 1 & 0\\
0 & 0 & 0 & -1\end{array}\right], & U_{3}=\left[\begin{array}{cccc}
1 & 0 & 0 & 0\\
0 & 1 & 0 & 0\\
0 & 0 & -1 & 0\\
0 & 0 & 0 & -1\end{array}\right], & U_{4}=U_{2}U_{3}\end{array}\]
The states that we wish to discriminate are given by: $\left|\psi_{i}\right\rangle =U_{i}\left|\psi\right\rangle ;\, i=1,\dots,4$,
where $\left|\psi\right\rangle =\frac{1}{3\sqrt{2}}\left(\begin{array}{cccc}
2 & 2 & 1 & 3\end{array}\right)^{T}$. The states are assumed to be equally likely. Then from inequality
(\ref{eq:P-opt-2}) \begin{eqnarray*}
P_{\mbox{opt}} & \leq & 1+\min_{\left\{ \theta_{2},\theta_{3},\theta_{4}\right\} }\frac{1}{18}\left[-4\cos\theta_{2}-\cos\theta_{3}+\cos\theta_{4}+4\cos\left(\theta_{3}-\theta_{2}\right)-\cos\left(\theta_{4}-\theta_{2}\right)-4\cos\left(\theta_{4}-\theta_{3}\right)\right].\end{eqnarray*}
The r.h.s is numerically minimized using Mathematica and we find that
\begin{eqnarray*}
P_{\mbox{opt}} & \leq & 0.2222,\end{eqnarray*}
which is in agreement with the optimal value $\frac{2}{9}$ \cite{Eldar-2003}. 
\end{document}